\theoremstyle{definition}
\newtheorem{definition}{Definition}[section]
\newtheorem{theorem}{Theorem}[section]
\newtheorem{algorithm}{Algorithm}[section]
\def\zsi{z^{(i)}}
\def\zsj{z^{(j)}}
\def\xsi{x^{(i)}}
\def\xsj{x^{(j)}}
\def\ysi{y^{(i)}}
\def\ysj{y^{(j)}}
\def\rsi{\rho^{(i)}}
\def\rsj{\rho^{(j)}}
\def\csi{\sigma^{(i)}}
\renewcommand{\title}[1]{\vbox{\center\LARGE{#1}}\vspace{5mm}}
\renewcommand{\author}[1]{\vbox{\center#1}\vspace{5mm}}
\newcommand{\address}[1]{\vbox{\center\em#1}}
\begin{document}

\title{Quantum Semi-Supervised Learning with Quantum Supremacy}

\begin{center}
\vspace{5mm}
Zhou Shangnan$^{a}$\footnote{\tt snzhou@stanford.edu.}
\end{center}

\address{${}^a$Stanford Institute for Theoretical Physics, \\
Stanford University, Stanford, CA 94305, USA}

\begin{abstract}
    Quantum machine learning promises to efficiently solve important problems. There are two persistent challenges in classical machine learning: the lack of labeled data, and the limit of computational power. We propose a novel framework that resolves both issues: quantum semi-supervised learning. Moreover, we provide a protocol in systematically designing quantum machine learning algorithms with quantum supremacy, which can be extended beyond quantum semi-supervised learning. In the meantime, we show that naive quantum matrix product estimation algorithm outperforms the best known classical matrix multiplication algorithm.
    We showcase two concrete quantum semi-supervised learning algorithms: a quantum self-training algorithm named the propagating nearest-neighbor classifier, and the quantum semi-supervised K-means clustering algorithm. By doing time complexity analysis, we conclude that they indeed possess quantum supremacy.
\end{abstract}

\section{Introduction}

Machine learning has made many seemingly impossible tasks possible: from visual and speech recognition, effective web search, to study of human genomics \cite{lecun2015deep,alphafold}. However, there are several long-standing bottlenecks in the field of machine learning, which slows down its pace in conquering more fields of science and technology. Two major challenges are the lack of labeled data, and the limit of computational power. In this paper, we propose a framework of quantum semi-supervised learning, which can overcome both difficulties at the same time. 

Semi-supervised learning \cite{zhu2009introduction} combines a small amount of labeled data with a large amount of unlabeled data during training, which tackles the common issue of the lack of labeled data. Quantum computation \cite{NielsenChuang} redefines the way computers create and manipulate information. Many quantum algorithms \cite{deutsch1992rapid, shor1994algorithms, grover1996fast, bernstein1997quantum, simon1997power, lloyd2013quantum, arute2019quantum} have been demonstrated to possess quantum supremacy, which means that they can execute the same task substantially faster than their classical counterparts. Quantum semi-supervised learning combines the advantages of both semi-supervised learning and quantum computation, therefore represents the future of machine learning and quantum physics.

Since semi-supervised learning handles both labeled and unlabeled data, in the limiting case when we only have labeled or unlabeled data, we get back to supervised or unsupervised learning. Hence, in some way, semi-supervised learning is more generic, and its algorithms can be used in supervised or unsupervised learning with small modifications. Therefore, many discussions in this paper also apply to quantum supervised or unsupervised learning.

In section \ref{framework}, we propose a general framework of quantum semi-supervised learning. In section \ref{supremacy}, we provide a generic protocol for designing quantum machine learning algorithms with quantum supremacy, which can be extended beyond quantum semi-supervised learning. The recipe can also be used to realize quantum supremacy in deep quantum neutral networks. One feature of our time complexity analysis is that we give a clear separation between memory access time complexity and algorithmic time complexity, so that the algorithmic advantage on quantum computers isn't overshadowed by the exponential speed-up from the fast access of quantum random memories. Moreover, we show that naive quantum matrix product estimation algorithm outperforms the best known classical matrix multiplication algorithm.

In section \ref{self}, we introduce quantum self-training, and point out its source of quantum supremacy. We give a concrete example, which is the quantum propagating nearest-neighbor algorithm. By comparing its time complexity to the classical case, we demonstrate its quantum supremacy. In section \ref{kmeans}, we present the quantum semi-supervised K-means clustering algorithm, and prove its quantum supremacy by time complexity analysis. 
We conclude the paper with outlooks in speeding up more complicated machine learning tasks including deep neural networks, along with an ultimate goal of using quantum-quantum learning to learn large quantum systems efficiently and reliably on a quantum computer.

\section{Framework of Quantum Semi-Supervised Learning}
\label{framework}

We first review semi-supervised learning, and then propose a general framework of quantum semi-supervised learning. Since we are going to introduce different types of quantum semi-supervised learning, we also call it classical or classical-classical semi-supervised learning. Semi-supervised learning lies between supervised and unsupervised learning: we have a small set of labeled data $\{ (\xsi, \zsi) \}_{i = 1}^l$, and a large set of unlabeled data $\{ \xsj \}_{j = l+1}^{l + u}$. 
Specifically, there are several different settings, including regression or classification with labeled and unlabeled data, constrained clustering, and dimensionality reduction with labeled instances whose reduced feature representation is given. We focus on the first setting. 

In supervised learning, the training sample is fully labeled, so the goal is always to label the future test data. However, in a semi-supervised setting, the training sample contains unlabeled data. Hence, there are two different goals in semi-supervised learning: inductive semi-supervised learning aims at predicting the labels of future test data, while transductive semi-supervised learning predicts the labels of the unlabeled instances in the training sample.

\begin{definition}
\textbf{Inductive Semi-Supervised Learning.}
Given a training sample 
$\{ (\xsi, \zsi) \}_{i = 1}^l$, $\{ \xsj \}_{j = l+1}^{l + u}$, inductive semi-supervised learning learns a function $f: \mathcal{X} \mapsto \mathcal{Y}$ so that $f$ is expected to be a good predictor on future data, beyond $\{ \xsj \}_{j = l+1}^{l + u}$.
\end{definition}

\begin{definition}
\textbf{Transductive Semi-Supervised Learning.}
Given a training sample $\{ (\xsi, \zsi) \}_{i = 1}^l, \{ \xsj \}_{j = l+1}^{l + u}$, transductive learning trains a function $f: \mathcal{X}^{l+u} \mapsto \mathcal{Y}^{l+u}$ so that $f$ is expected to be a good predictor on the unlabeled data $\{ \xsj \}_{j = l+1}^{l + u}$.
\end{definition}

There are different settings of quantum semi-supervised learning. First, classical-quantum semi-supervised learning encodes both labeled and unlabeled classical data in quantum states, and then uses quantum processors to carry out the learning phase. With a state-of-art design of quantum algorithm, classical-quantum semi-supervised learning executes a suitable learning task much faster than its classical-classical counterpart. 
Later, we will provide a general protocol in designing such algorithms, and show several examples. 

\begin{definition}
\textbf{Classical-Quantum Semi-Supervised Learning.}
Given a training sample $\{ (\xsi, \zsi) \}_{i = 1}^l, \{ \xsj \}_{j = l+1}^{l + u}$, classical-quantum semi-supervised learning encodes the data into quantum states, and then executes the learning task on a quantum computer. One convenient encoding is to map the training sample into product states: labeled data $\{ |\xsi\rangle |\zsi\rangle) \}_{i = 1}^l$, unlabeled data $\{ |\xsj\rangle \}_{j = l+1}^{l + u}$, stored in QRAM data structure. 
\end{definition}

Second, quantum-classical semi-supervised learning maps quantum data to a classical data structure, and then the problem turns into a classical-classical semi-supervised learning problem. While this sounds simple, there are many underlying subtleties. Suppose our training sample is $\{ (\rsi, \csi) \}_{i = 1}^l, \{ \rsj \}_{j = l+1}^{l + u}$, where $\rsi$'s are quantum states, and the labels $\csi$'s can be either classical or quantum. If we know $\rsi$'s and $\csi$'s exactly, then we essentially have classical data, and the mapping is completely trivial. Hence, the nontrivial case is when we have zero or partial knowledge of the training sample, but those quantum states are stored in a quantum memory nicely. We consider this case as the general setup of quantum-classical semi-supervised learning.

In this scenario, the novelty and challenge lie in the first step: how to efficiently and accurately extract classical information from the quantum data? Different mappings may result in different training performances and computational costs. The information loss when making quantum measurement, as well as the no-cloning theorem, adds additional difficulties to the problem. One simple but resource-consuming approach is to do efficient quantum tomography when many copies of same instances are present. To some extent, we have to learn the quantum states first. 

When certain limitations forbid us to do complete tomography, the problem becomes more interesting. It has more of an unsupervised learning flavor, as our training data, even the labeled ones, doesn't give out concrete knowledge. The way we process the quantum data may have crucial influence on the learning performance. Again, even in the regime of classical machine learning, pre-processing data can have significant impact on the training output. 
For inductive learning, this can be even trickier. When we have new quantum data coming in, are we going to process the test data the same way as we did for the training set? Intuitively, the answer is yes. A thorough discussion will be illuminated in future work.

\begin{definition}
\textbf{Quantum-Classical Semi-Supervised Learning.}
Given a training sample $\{ (\rsi, \csi) \}_{i = 1}^l, \{ \rsj \}_{j = l+1}^{l + u}$, where $\rsi$'s and $\csi$'s are partially known or completely unknown, quantum-classical semi-supervised learning extracts classical information from the initial quantum data using certain quantum channel, and then train the resulting classical data on a classical computer. For inductive learning, it uses the same channel to process the test quantum data, and then makes prediction on the corresponding classical data.
\end{definition}

Quantum-classical semi-supervised learning turns a complicated quantum problem to a simpler classical problem, and then solves it automatically on a classical computer, which is better understood at the current stage. 
The drawback is the loss of fidelity of the original data. To combat this, it is natural to skip the first step, which is the conversion of quantum data to classical data. When doing so, we learn the pattern of quantum data on a quantum computer, which is quantum-quantum learning.

\begin{definition}
\textbf{Quantum-Quantum Semi-Supervised Learning.}
Given a training sample $\{ (\rsi, \csi) \}_{i = 1}^l, \{ \rsj \}_{j = l+1}^{l + u}$, where $\rsi$'s and $\csi$'s are partially known or completely unknown, quantum-quantum semi-supervised learning executes the learning task on a quantum computer. 
\end{definition}

Intuitively, it is most natural to learn a quantum system on a quantum computer. However, with little classical information in this setting, we need a new set of theories and algorithms \cite{shangnan2019complexity, shangnan2021max, shangnan2021data} in the training process, even for things as simple as gradient descent \cite{shangnan2021qml}. Another challenge is that we are still limited by near-term quantum devices. Hence, right now we still need quantum-classical learning to aid our process of learning a quantum system.
In the near future, when fault-tolerant quantum computers are in commercial use, the default choice is to use quantum-quantum learning. 
These settings are not completely distinctive. Running tasks in a hybrid way can improve efficiency in time and space.

\section{Quantum Supremacy of Classical-Quantum Learning}
\label{supremacy}

In this section, we provide a generic protocol in designing classical-quantum learning algorithms that possess quantum supremacy. The idea is to take advantage of the fact that certain data acquisitions and manipulations can be done faster on a quantum computer \cite{QRAM, lloyd2013quantum, kerenidis2018q, hhl}.

\begin{theorem}
\textbf{QRAM data structure \cite{QRAM}.} 
Let $V \in \mathcal{R}^{N \times d}$, there exists a data structure to store the rows of $V$ such that

1. The time to insert, update, or delete a single entry $v_{ij}$ is 
$O\big(\log (N d) \big)$.

2. A quantum algorithm with access to the data structure can perform the following unitaries in time $O\big(\log (N d) \big)$.

(a) $|i\rangle |0\rangle \mapsto |i\rangle |v_i\rangle$ for $i \in [N]$.

(b) $|0\rangle \mapsto \sum_{i \in [N]} |v_i| |i\rangle$.
\end{theorem}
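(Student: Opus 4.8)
The plan is to realize the data structure as a family of balanced binary trees, which simultaneously supports fast classical updates and coherent amplitude loading. First I would, for each row index $i \in [N]$, build a binary tree $B_i$ with $d$ leaves: the $j$-th leaf stores the pair $\big(v_{ij}^2, \mathrm{sign}(v_{ij})\big)$, and every internal node stores the sum of the values held by its two children, so that the root of $B_i$ holds $\|v_i\|^2 = \sum_{j} v_{ij}^2$. On top of these $N$ roots I would lay a second balanced tree $T$ whose $i$-th leaf holds $\|v_i\|^2$ and whose root holds $\sum_i \|v_i\|^2 = \|V\|_F^2$. The whole object has $O(Nd)$ nodes and depth $O(\log N + \log d) = O(\log(Nd))$.

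For Part 1, inserting, updating, or deleting an entry $v_{ij}$ changes exactly one leaf of $B_i$ and forces a recomputation of the partial sums along the unique leaf-to-root path in $B_i$, followed by the analogous path in $T$. These paths have lengths $O(\log d)$ and $O(\log N)$, and each node correction is a single addition, so the total cost is $O(\log(Nd))$. For Part 2(a), the map $|i\rangle |0\rangle \mapsto |i\rangle |v_i\rangle$ is a coherent read: I would route the address register $|i\rangle$ through the $O(\log(Nd))$ levels of a bucket-brigade addressing tree to fetch, in superposition, the entries stored under index $i$, giving circuit depth $O(\log(Nd))$, which is the quantity the theorem charges for memory access.

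For Part 2(b), I would prepare the normalized state $\frac{1}{\|V\|_F}\sum_{i}\|v_i\|\,|i\rangle$ by a sequence of conditional rotations that descends $T$ one level at a time. Starting from the root state $|0\rangle$, at a node $u$ of total weight $w_u$ with children of weights $w_L, w_R$, I apply the rotation $|0\rangle \mapsto \sqrt{w_L/w_u}\,|0\rangle + \sqrt{w_R/w_u}\,|1\rangle$, controlled on the path taken so far. After $\log N$ levels the amplitude of branch $i$ is the telescoping product along the root-to-leaf path, namely $\sqrt{w_i/w_{\mathrm{root}}} = \|v_i\|/\|V\|_F$, which is exactly the target state. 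Correctness I would establish by induction on depth, showing that after $k$ levels the register carries $\sum_{u \text{ at depth } k}\sqrt{w_u/w_{\mathrm{root}}}\,|u\rangle$.

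The hard part will be Part 2(b): the rotation angle at each level is data-dependent, so at each node I must load the stored weights coherently (in superposition over which node is addressed), convert them into an angle via $\arcsin\sqrt{w_R/w_u}$, apply the controlled single-qubit rotation, and then \emph{uncompute} the loaded weights so that no garbage stays entangled with the output index. The delicate accounting is to control each level's rotation on the bits already produced while bounding the arithmetic for the square-root and arcsine so that each level contributes only $O(\log(Nd))$ depth, keeping the overall preparation within the claimed $O(\log(Nd))$ bound.
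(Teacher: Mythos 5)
The paper itself gives no proof of this theorem; it is imported verbatim from the cited reference, so there is nothing internal to compare against. Your proposal correctly reconstructs the standard construction from that reference (the Kerenidis--Prakash tree data structure): per-row binary trees of squared entries with partial sums at internal nodes, a top tree over the row norms, $O(\log(Nd))$ path updates for Part 1, and conditional rotations descending the tree for state preparation. That is the right approach and the telescoping-amplitude induction for Part 2(b) is the correct correctness argument.

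Two points need repair. First, in Part 2(a) the target $|v_i\rangle$ is the \emph{amplitude-encoded} row state $\frac{1}{\|v_i\|}\sum_j v_{ij}|j\rangle$ (this is what the later distance- and inner-product-estimation theorems consume), not a basis-state fetch of the entries; a bucket-brigade lookup of the stored values does not produce it. The fix is to apply exactly the conditional-rotation descent you describe for Part 2(b), but on the row tree $B_i$ controlled on the address register, using the stored signs at the leaves to set the phases $\mathrm{sign}(v_{ij})$ in the final level. Second, your own accounting does not close: if each of the $O(\log(Nd))$ levels costs $O(\log(Nd))$ for loading weights, computing $\arcsin\sqrt{w_R/w_u}$ to adequate precision, rotating, and uncomputing, the total is $O(\log^2(Nd))$, not the $O(\log(Nd))$ stated in the theorem. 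This is not a flaw in your construction so much as in the theorem's statement, which suppresses the extra polylogarithmic and precision factors present in the original reference; you should either state the bound as $\mathrm{polylog}(Nd)$ or explicitly flag that the arithmetic is being charged at unit cost.
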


Using a classical RAM data structure, these tasks take $O(N d)$ time to complete. Hence, QRAM data structure provides an exponential speed-up. Many quantum algorithm references combine the memory access time complexity and algorithmic time complexity together when doing time complexity analysis. However, most traditional algorithm analysis takes the memory access time complexity as $O(1)$, because modern computers allow processor caches, memory level parallelism, etc. To put the comparison of quantum and classical algorithms on an equal footing, in this paper, we take memory access time complexity as a constant. It is good to keep in mind that quantum computers are inherently exponentially faster in reading and writing at a memory location.

\begin{theorem}
\textbf{Distance Estimation \cite{lloyd2013quantum, kerenidis2018q}.} 
Given data matrices $X \in \mathbb{R}^{l \times d}$ and $Y \in \mathbb{R}^{u \times d}$ stored in the QRAM data structure, where $\xsi$ is the $i$-th row of $X$, and $\ysj$ is the $j$-th row of $Y$. Suppose that the following unitaries $|i\rangle |0\rangle \mapsto |i\rangle |\xsi\rangle$, and $|j\rangle |0\rangle \mapsto |j\rangle |\ysj\rangle$ can be performed in time $\Lambda$ and the norms of the vectors are known. For any $\Delta > 0$ and $\epsilon > 0$, there exists a quantum algorithm that computes the $L^2$ distance between two vectors $\xsi$ and $\ysj$:
$|i\rangle |j\rangle |0\rangle \mapsto |i\rangle |j\rangle |\overline{d^2(\xsi, \ysj)} \rangle$, where $|\overline{d^2(\xsi, \ysj)} - d^2(\xsi, \ysj)| \leq \epsilon$ with probability at least $1 - 2 \Delta$ in time $T = \tilde{O}\Big(|\xsi||\ysj| \Lambda \log(1/\Delta) / \epsilon \Big)$.
\end{theorem}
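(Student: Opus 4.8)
The plan is to reduce the squared-distance estimation to an inner-product estimation, since for real vectors
$$d^2(\xsi, \ysj) = |\xsi|^2 + |\ysj|^2 - 2\langle \xsi, \ysj\rangle,$$
and the two norms are known by hypothesis. Writing $|\xsi\rangle$ and $|\ysj\rangle$ for the amplitude encodings of the normalized vectors, so that $\langle \xsi, \ysj\rangle = |\xsi||\ysj|\,\langle\xsi|\ysj\rangle$, it suffices to estimate the overlap $s := \langle\xsi|\ysj\rangle$ of the two normalized states and then rescale by the known norms. All the quantum work goes into estimating $s$; everything else is classical arithmetic on the recorded value, applied coherently into the output register.

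First I would build the overlap into a measurement probability via a controlled state preparation. Starting from $|i\rangle|j\rangle|0\rangle$, I attach an ancilla in $\frac{1}{\sqrt 2}(|0\rangle+|1\rangle)$ and use the controlled versions of the two given oracles to produce
$$|i\rangle|j\rangle\,\tfrac{1}{\sqrt 2}\big(|0\rangle|\xsi\rangle + |1\rangle|\ysj\rangle\big),$$
which costs $O(\Lambda)$. A Hadamard on the ancilla then makes the probability of outcome $|1\rangle$ equal to $p = \frac{1}{2}(1 - s)$, so recovering $p$ to additive error recovers $s$.

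Next I would apply amplitude estimation to the unitary that prepares this state, treating $|1\rangle$ on the ancilla as the \emph{good} subspace. With $M$ calls to the preparation unitary, amplitude estimation returns $\tilde p$ with $|\tilde p - p| = O(1/M)$, hence an estimate $\tilde s = 1 - 2\tilde p$ with $|\tilde s - s| = O(1/M)$. The key error-propagation step is that the induced error in $d^2$ is $2|\xsi||\ysj|\,|\tilde s - s|$, so to guarantee $|\overline{d^2} - d^2| \le \epsilon$ I must take $1/M = O\big(\epsilon/(|\xsi||\ysj|)\big)$, i.e. $M = O\big(|\xsi||\ysj|/\epsilon\big)$ iterations, each of cost $O(\Lambda)$. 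This is exactly where the norm factors $|\xsi||\ysj|$ enter the final running time. Amplitude estimation succeeds only with constant probability; to upgrade this to $1 - 2\Delta$ I would run $O(\log(1/\Delta))$ independent copies and take the median estimate, contributing the $\log(1/\Delta)$ factor. Multiplying iterations, repetitions, and per-iteration cost gives $T = \tilde O\big(|\xsi||\ysj|\,\Lambda\,\log(1/\Delta)/\epsilon\big)$, with the tilde absorbing the logarithmic overhead of writing $\tilde s$ to $\epsilon$-precision.

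The main obstacle I anticipate is not the single-pair estimate but keeping the procedure coherent and clean: amplitude estimation and the median trick introduce ancillary garbage registers, which must be uncomputed so that the stated map $|i\rangle|j\rangle|0\rangle \mapsto |i\rangle|j\rangle|\overline{d^2}\rangle$ holds without entangling the index registers to junk, since the surrounding classifier applies this map in superposition over all pairs $(i,j)$. I would handle this by computing $\tilde s$ into a scratch register, copying out the arithmetic value $\overline{d^2}$, and then reversing the estimation to disentangle the scratch, at the usual cost of a constant-factor overhead and a careful accounting of the failure branch where amplitude estimation errs.
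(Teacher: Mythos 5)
The paper does not actually prove this theorem: it is imported verbatim by citation to the references on quantum clustering and the $q$-means work, and no argument is given in the text. Your proposal therefore cannot be compared against an in-paper proof; what it does is correctly reconstruct the standard argument from the cited literature. The chain you describe --- the polarization identity $d^2(\xsi,\ysj)=|\xsi|^2+|\ysj|^2-2|\xsi||\ysj|\langle \xsi|\ysj\rangle$ with known norms, the controlled preparation of $\tfrac{1}{\sqrt2}(|0\rangle|\xsi\rangle+|1\rangle|\ysj\rangle)$ followed by a Hadamard so that $\Pr[1]=\tfrac12(1-s)$, amplitude estimation with $M=O(|\xsi||\ysj|/\epsilon)$ iterations to control the rescaled error, and the median over $O(\log(1/\Delta))$ runs --- is exactly how the source reference obtains the bound $\tilde O\big(|\xsi||\ysj|\Lambda\log(1/\Delta)/\epsilon\big)$, and your accounting of where each factor enters is right. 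Your closing remark about uncomputing the amplitude-estimation garbage so the map stays clean in superposition over $(i,j)$ is a genuine requirement that the paper's later algorithms silently rely on (the propagating nearest-neighbor classifier and the $K$-means step both apply this map coherently over many index pairs), so it is worth making explicit; the only small additions I would ask for are (i) a note that the Hadamard test yields $\mathrm{Re}\,\langle \xsi|\ysj\rangle$, which suffices here only because the data are real, and (ii) the observation that the norms must themselves be retrievable coherently from the QRAM structure for the final arithmetic to be performed in superposition.
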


\begin{theorem}\label{inner}
\textbf{Inner Product Estimation \cite{QRAM, kerenidis2018q}.}
Given data matrices $X \in \mathbb{R}^{l \times d}$ and $Y \in \mathbb{R}^{u \times d}$ stored in the QRAM data structure, where $\xsi$ is the $i$-th row of $X$, and $\ysj$ is the $j$-th row of $Y$. 
Suppose that the following unitaries $|i\rangle |0\rangle \mapsto |i\rangle |\xsi\rangle$, and $|j\rangle |0\rangle \mapsto |j\rangle |\ysj\rangle$ can be performed in time $\Lambda$ and the norms of the vectors are known. For any $\Delta > 0$ and $\epsilon > 0$, there exists a quantum algorithm that computes the inner product between two vectors $\xsi$ and $\ysj$:
$|i\rangle |j\rangle |0\rangle \mapsto |i\rangle |j\rangle |\overline{(\xsi, \ysj)}$, where $|\overline{(\xsi, \ysj)} - (\xsi, \ysj)| \leq \epsilon$ with probability at least $1 - 2 \Delta$ in time $T = \tilde{O}\Big(|\xsi||\ysj| \Lambda \log(1/\Delta) / \epsilon \Big)$.
\end{theorem}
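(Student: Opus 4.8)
The plan is to reduce inner product estimation to the Distance Estimation theorem via the polarization identity. Recall that for any two vectors the squared $L^2$ distance expands as $d^2(\xsi, \ysj) = |\xsi|^2 + |\ysj|^2 - 2(\xsi, \ysj)$, so that
\begin{equation*}
(\xsi, \ysj) = \tfrac{1}{2}\left( |\xsi|^2 + |\ysj|^2 - d^2(\xsi, \ysj) \right).
\end{equation*}
Because the norms $|\xsi|$ and $|\ysj|$ are known exactly by hypothesis, an estimate of the inner product is obtained directly from an estimate of the squared distance, with the error of the former being exactly half that of the latter.

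First I would invoke the Distance Estimation theorem with precision parameter $\epsilon' = 2\epsilon$ and the same failure parameter $\Delta$. This produces the coherent map $|i\rangle|j\rangle|0\rangle \mapsto |i\rangle|j\rangle|\overline{d^2(\xsi, \ysj)}\rangle$ satisfying $|\overline{d^2(\xsi, \ysj)} - d^2(\xsi, \ysj)| \leq 2\epsilon$ with probability at least $1 - 2\Delta$, in time $\tilde{O}\big(|\xsi||\ysj|\Lambda \log(1/\Delta)/\epsilon\big)$, where the factor of $2$ from $\epsilon'$ is absorbed into the $\tilde{O}$. Next I would apply coherent, reversible arithmetic on the output register: subtract the stored distance estimate from the known constant $|\xsi|^2 + |\ysj|^2$ and scale by $1/2$, writing the result into a fresh register to obtain $|i\rangle|j\rangle|\overline{(\xsi, \ysj)}\rangle$ with $\overline{(\xsi, \ysj)} = \tfrac{1}{2}\big(|\xsi|^2 + |\ysj|^2 - \overline{d^2(\xsi, \ysj)}\big)$.

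The error analysis is then immediate: $|\overline{(\xsi, \ysj)} - (\xsi, \ysj)| = \tfrac{1}{2}|\overline{d^2(\xsi, \ysj)} - d^2(\xsi, \ysj)| \leq \tfrac{1}{2}(2\epsilon) = \epsilon$, inheriting the success probability $1 - 2\Delta$. The arithmetic post-processing is elementary and reversible, so it contributes only logarithmic overhead and leaves the stated time complexity unchanged up to the constant factor absorbed in $\tilde{O}$.

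The main obstacle I anticipate is not the error bookkeeping but ensuring that the reduction stays fully coherent: the distance estimate must be delivered as the stated unitary map into an ancilla register rather than as a measured classical scalar, so that the subsequent arithmetic can run in superposition over $|i\rangle|j\rangle$ without collapsing the state. Provided the distance subroutine is implemented with amplitude estimation whose output is written reversibly, the polarization step preserves this coherence. A self-contained alternative is to estimate the normalized overlap $\langle \xsi | \ysj \rangle$ directly by preparing $\tfrac{1}{\sqrt{2}}\big(|0\rangle|\xsi\rangle + |1\rangle|\ysj\rangle\big)$, performing a Hadamard test, and running amplitude estimation; rescaling by the known norms $|\xsi||\ysj|$ then recovers the inner product with the same $1/\epsilon$ dependence. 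This route makes the origin of the $|\xsi||\ysj|$ factor transparent, but it essentially re-derives the distance estimation machinery, so I would favor the polarization reduction for brevity.
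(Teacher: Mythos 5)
The paper does not actually prove this theorem: it is imported by citation from \cite{QRAM, kerenidis2018q} and used as a black box, so there is no in-paper argument to compare against. Taken on its own terms, your reduction is correct. The polarization identity $d^2(\xsi,\ysj)=|\xsi|^2+|\ysj|^2-2(\xsi,\ysj)$ together with the hypothesis that the norms are known exactly does transfer an additive $2\epsilon$ guarantee on $\overline{d^2(\xsi,\ysj)}$ to an additive $\epsilon$ guarantee on $\overline{(\xsi,\ysj)}$, the success probability $1-2\Delta$ is inherited unchanged because the post-processing is deterministic reversible arithmetic, and the constant factor from $\epsilon'=2\epsilon$ is absorbed by the $\tilde{O}$. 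Your emphasis on keeping the subroutine coherent (writing $\overline{d^2}$ into an ancilla rather than measuring it) is the right thing to worry about and is consistent with how the Distance Estimation theorem is stated in the paper as a unitary map. One caveat on logical order: in the cited sources the inner product estimate is the primitive, obtained by amplitude estimation on a state whose overlap encodes $(\xsi,\ysj)$, and the squared distance is then derived from it via the very same identity; your reduction runs that dependency in reverse. This is harmless here, since the paper states Distance Estimation first as an available theorem, but it means your argument does not independently establish the underlying amplitude-estimation machinery or explain where the $|\xsi||\ysj|/\epsilon$ scaling comes from --- a point you correctly acknowledge by sketching the Hadamard-test alternative, which is essentially the proof in the references.
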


The above results show that for quantum distance and inner product estimations, the algorithmic time complexity is $O(1)$ in terms of the dimensions of vectors, while the same classical calculation takes $O(d)$. 

Since matrix multiplications can be interpreted as calculating many inner products, we propose and prove the following theorem:

\begin{theorem}
\textbf{Matrix Product Estimation.}
Given data matrices $X \in \mathbb{R}^{l \times d}$ and $Y \in \mathbb{R}^{u \times d}$ stored in the QRAM data structure, where $\xsi$ is the $i$-th row of $X$, and $\ysj$ is the $j$-th row of $Y$. 
Suppose that the following unitaries $|i\rangle |0\rangle \mapsto |i\rangle |\xsi\rangle$, and $|j\rangle |0\rangle \mapsto |j\rangle |\ysj\rangle$ can be performed in time $\Lambda$ and the norms of the vectors are known. For any $\Delta > 0$ and $\epsilon > 0$, there exists a quantum algorithm that computes the product between two matrices $X$ and $Y^T$: $Z = X Y^T$, where $|\overline{z_{i j}} - z_{i j}| \leq \epsilon$ with probability at least $1 - 2 \Delta$ in time $T = \tilde{O}\Big(|\xsi||\ysj| l u \Lambda \log(1/\Delta) / \epsilon \Big)$.
\end{theorem}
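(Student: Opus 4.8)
The plan is to reduce the matrix product to a family of inner products, each of which Theorem \ref{inner} already estimates efficiently. The starting observation is that the $(i,j)$ entry of $Z = X Y^T$ is exactly $z_{ij} = (\xsi, \ysj)$, the inner product of the $i$-th row of $X$ with the $j$-th row of $Y$. Estimating the whole product $Z$ therefore amounts to estimating $lu$ inner products, one for each pair $(i,j) \in [l] \times [u]$, so the theorem should follow by invoking the inner-product estimator $lu$ times and bookkeeping the cost.

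First I would apply the Inner Product Estimation theorem entrywise. For each fixed pair $(i,j)$, Theorem \ref{inner} supplies a quantum algorithm realizing $|i\rangle |j\rangle |0\rangle \mapsto |i\rangle |j\rangle |\overline{(\xsi, \ysj)}\rangle$ with $|\overline{(\xsi, \ysj)} - (\xsi, \ysj)| \leq \epsilon$, succeeding with probability at least $1 - 2\Delta$ in time $\tilde{O}\big(|\xsi||\ysj| \Lambda \log(1/\Delta)/\epsilon\big)$. Reading off $\overline{z_{ij}} := \overline{(\xsi, \ysj)}$ then yields an $\epsilon$-accurate estimate of the target matrix entry at the stated confidence, exactly matching the error condition $|\overline{z_{ij}} - z_{ij}| \leq \epsilon$ in the claim.

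Next I would account for the cost of producing all $lu$ entries. Running the single-pair estimator sequentially over every $(i,j)$, the total running time is the sum $\sum_{i \in [l], j \in [u]} \tilde{O}\big(|\xsi||\ysj| \Lambda \log(1/\Delta)/\epsilon\big)$. Bounding every row norm by the largest one—so that $|\xsi|$ and $|\ysj|$ in the final bound stand for $\max_i |\xsi|$ and $\max_j |\ysj|$—this collapses to $T = \tilde{O}\big(|\xsi||\ysj| l u \Lambda \log(1/\Delta)/\epsilon\big)$, which is precisely the advertised complexity.

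The one place that needs care, and what I expect to be the main obstacle, is the aggregation of the success probability. Theorem \ref{inner} guarantees accuracy for a single entry with probability at least $1 - 2\Delta$; if only per-entry accuracy is demanded, this transfers verbatim and nothing more is needed. If instead we require that all $lu$ entries be simultaneously $\epsilon$-accurate with probability at least $1 - 2\Delta$, I would apply a union bound: driving the per-entry failure probability down to $2\Delta/(lu)$ costs only an additional $\log(lu)$ factor through the $\log(1/\Delta)$ dependence of each inner-product estimation, and this factor is absorbed by the $\tilde{O}$ notation. Hence the global guarantee is attainable at no asymptotic penalty, and the theorem follows.
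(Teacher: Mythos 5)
Your proposal is correct and follows essentially the same route as the paper's own proof: identify $z_{ij} = (\xsi, \ysj)$, invoke the Inner Product Estimation theorem for each of the $lu$ pairs, and sum the costs. Your additional remarks on interpreting $|\xsi|, |\ysj|$ as maxima over rows and on boosting the per-entry confidence via a union bound address details the paper's one-line proof leaves implicit, so nothing further is needed.
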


\begin{proof} 
$z_{i j} = (\xsi, \ysj)$, which is the inner product of $\xsi$ and $\ysj$. To calculate matrix $Z$, we need to estimate $l u$ such inner products. By Theorem \ref{inner}, it takes time  $T = \tilde{O}\Big(|\xsi||\ysj| l u \Lambda \log(1/\Delta) / \epsilon \Big)$ to achieve $|\overline{z_{i j}} - z_{i j}| \leq \epsilon$ with probability at least $1 - 2 \Delta$.
\end{proof}

For naive matrix multiplications between an $m \times k$ matrix and an $k \times n$ matrix, quantum estimation takes time $O(m n)$, while classical calculation takes time $O(m n k)$. When dealing with $n \times n$ matrices, then the naive quantum matrix multiplication algorithm takes time $O(n^2)$. 
This is dramatic, because the classical matrix multiplication algorithm with best asymptotic complexity runs in $O(n^{2.3728596})$ time, and the naive classical algorithm runs in $O(n^3)$ time.

\begin{theorem}
\textbf{HHL Algorithm \cite{hhl}.}
Given a sparse $N \times N$ matrix $A$ with condition number $\kappa$, and a vector $b$. Suppose $M$ is a matrix, and $x$ is a vector such that $A x = b$. HHL algorithm estimates $x^{\dagger} M x$ in $\tilde{O} \big(\text{poly}(\log N, \kappa) \big)$ time.

In contrast, the classical algorithm that estimates $x^{\dagger} M x$ takes time $\tilde{O}(N \sqrt{\kappa})$. For the same task, HHL algorithm presents an exponential speed-up with respect to matrix size $N$.
\end{theorem}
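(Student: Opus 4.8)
The plan is to reconstruct the three-register quantum circuit underlying the HHL algorithm and then account for how $N$ and $\kappa$ enter the running time of each stage. First I would prepare the input state $|b\rangle = \sum_i b_i |i\rangle / \|b\|$ from the QRAM data structure, which by the QRAM theorem above costs only $O(\log N)$ time. Writing the spectral decomposition $A = \sum_j \lambda_j |u_j\rangle\langle u_j|$ and expanding $|b\rangle = \sum_j \beta_j |u_j\rangle$, the core idea is that applying $A^{-1}$ amounts to rescaling each amplitude $\beta_j$ by $1/\lambda_j$, since $|x\rangle \propto A^{-1}|b\rangle = \sum_j (\beta_j/\lambda_j)|u_j\rangle$.

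The next step is to extract the eigenvalues into an ancilla register by quantum phase estimation applied to the unitary $e^{iAt}$. Here sparsity is essential: for a sparse $A$ the Hamiltonian simulation $e^{iAt}$ can be implemented to accuracy $\epsilon$ in time $\text{poly}(\log N, t, 1/\epsilon)$ by standard sparse-Hamiltonian-simulation techniques, so the register transforms as $\sum_j \beta_j |u_j\rangle|0\rangle \mapsto \sum_j \beta_j |u_j\rangle|\tilde\lambda_j\rangle$. I would then adjoin a single qubit and perform the controlled rotation $|\tilde\lambda_j\rangle|0\rangle \mapsto |\tilde\lambda_j\rangle\big(\sqrt{1 - C^2/\tilde\lambda_j^2}\,|0\rangle + (C/\tilde\lambda_j)\,|1\rangle\big)$ with $C = O(1/\kappa)$, uncompute the phase-estimation register, and postselect the flag qubit on $|1\rangle$; the surviving state is proportional to $\sum_j (\beta_j/\lambda_j)|u_j\rangle = |x\rangle$. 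Finally, measuring the observable $M$ on $|x\rangle$ yields the estimate $\overline{x^{\dagger} M x}$.

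The time-complexity bookkeeping is where $\kappa$ enters twice. To distinguish eigenvalues as small as $\lambda_{\min} \sim 1/\kappa$, phase estimation must run to precision $O(1/\kappa)$, costing $O(\kappa)$ calls to the simulation; and because the smallest postselection amplitude is $C/\lambda_{\max} \sim 1/\kappa$, the success probability is $\Omega(1/\kappa^2)$, so amplitude amplification contributes a further $O(\kappa)$ repetitions. Multiplying these against the $\text{poly}(\log N)$ cost of each sparse-simulation call gives the claimed $\tilde{O}\big(\text{poly}(\log N, \kappa)\big)$ runtime. For the classical comparison I would invoke the conjugate-gradient method, whose convergence rate is governed by $\sqrt{\kappa}$ and which requires $\tilde{O}(N\sqrt{\kappa})$ arithmetic operations on a sparse system, so the quantum algorithm is exponentially faster in $N$.

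The main obstacle I expect is controlling the Hamiltonian-simulation error and the phase-estimation error simultaneously: the rotation angle depends on the estimated eigenvalue $\tilde\lambda_j$ rather than the true $\lambda_j$, so I would have to bound how the $1/\tilde\lambda_j$ rescaling amplifies small eigenvalue errors, precisely the regime where $\lambda_j \sim 1/\kappa$ is most delicate, and confirm that keeping the total error below $\epsilon$ only inflates the running time by the benign $\text{poly}$ factors already absorbed into the $\tilde{O}$.
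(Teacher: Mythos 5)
The paper offers no proof of this statement at all: it is quoted verbatim as a known result and attributed to the cited HHL reference, so there is no in-paper argument to compare against. Your reconstruction — state preparation of $|b\rangle$, phase estimation via sparse Hamiltonian simulation, the $C/\tilde\lambda_j$ controlled rotation with postselection, the two factors of $\kappa$ from precision and amplitude amplification, and conjugate gradient as the classical baseline — is exactly the standard argument from that reference and is correct, modulo the implicit assumption (also suppressed in the theorem statement) that $M$ is an efficiently implementable observable.
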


Most machine learning algorithms require calculations of distance, inner product, matrix product and inverse. For these algorithms, if we can perform these calculations on a quantum computer, and make sure the quantum state preparation and classical information retrieval are not exponentially costly, we can realize quantum supremacy in all these algorithms. This sheds lights on training deep quantum neural networks faster, considering that many inner product calculations and matrix inverses are carried out in the training process.

For the remainder of the paper, we showcase general classes of quantum semi-supervised learning algorithms, and give some concrete examples. Moreover, we compare the time complexity of the classical and quantum versions, and demonstrate quantum supremacy in these scenarios.

\section{Quantum Self-Training}
\label{self}

Self-training is characterized by the fact that the learning process uses its own predictions to teach itself. It is also called self-teaching or bootstrapping because of this. Self-training assumes that its own predictions, at least the high confidences ones, tend to be correct. For classification tasks with well-separated clusters, this is usually the case.

The major goal of self-training is to learn an appropriate predictor $f$. We now show how this is done in quantum self-training, and point out which parts of the general quantum self-training algorithm possess quantum supremacy.

\begin{algorithm}
\textbf{Quantum Self-training}

Input: labeled data $\{ |\xsi\rangle |\zsi\rangle) \}_{i = 1}^l$, unlabeled data $\{ |\xsj\rangle \}_{j = l+1}^{l + u}$, stored in QRAM data structure.

1. Initially, let $L = \{ |\xsi\rangle |\zsi\rangle) \}_{i = 1}^l$, and $U = \{ |\xsj\rangle \}_{j = l+1}^{l + u}$.

2. Repeat until convergence:

3. Train $f$ from $L$ using supervised learning.

4. Apply $f$ to the unlabeled instances in $U$.

\end{algorithm}

In line 3, we can apply existing quantum supervised learning algorithms \cite{lloyd2013quantum, QMLNature, KernelPRL} or derive new quantum supervised learning algorithms based on principles mentioned in section \ref{supremacy}, so that quantum supremacy is achieved.
In line 4, because QRAM reads and writes data exponentially faster than RAM, we naturally have quantum supremacy.

We now give a simple and concrete example of quantum self-training, which is the propagating nearest-neighbor algorithm. We first show the classical version and then propose the quantum version.

\begin{algorithm}
\textbf{Classical Propagating Nearest-Neighbor Classifier.}

Input: labeled data $\big\{ \big(\xsi, f(\xsi)\big) \big\}_{i = 1}^l$, unlabeled data $\{ \xsj \}_{j = l+1}^{l + u}$, distance function $d()$.

1. Initially, let $L = \big\{ \big(\xsi, f(\xsi)\big) \big\}_{i = 1}^l $, and $U = \{ \xsj \}_{j = l+1}^{l + u}$.

2. Repeat until $U$ is empty:

3. Select $x = \arg \min_{x \in U} \min_{y \in L} d(x, y)$, $y = \arg \min_{y \in L} \min_{x \in U} d(x,y)$.

4. Set $f(x) = f(y)$, where $f(y)$ is the label of $y$. Break ties randomly.

5. Remove $x$ from $U$. Add $(x, f(x))$ to $L$.

\end{algorithm}

In each iteration, the classical algorithm selects the unlabeled instance that is closest to any "labeled" instance, i.e., any instance currently in $L$. These "labeled" instances may from the original labeled set, or were labeled in previous iterations. The selected instance is then assigned the label of its nearest neighbor and inserted into $L$ as if it were truly labeled data. The process repeats until all instances have been added to $L$.

\begin{algorithm}
\textbf{Quantum Propagating Nearest-Neighbor Classifier.}

Input: labeled data $\{ |\xsi\rangle |\zsi\rangle \}_{i = 1}^l$, unlabeled data $\{ |\xsj\rangle \}_{j = l+1}^{l + u}$, stored in QRAM data structure. Distance function $d()$.

1. Initially, let $L = \{ |\xsi\rangle |\zsi\rangle \}_{i = 1}^l $, and $U = \{ |\xsj\rangle \}_{j = l+1}^{l + u}$. 

2. Repeat until $U$ is empty:

3. Step 1: Point Distance Estimation.

Perform the map:

\begin{equation*}
   \otimes_{|\xsi\rangle \in L} |i\rangle \otimes_{|\xsj\rangle \in U} |j\rangle |0\rangle
   \mapsto \otimes_{|\xsi\rangle \in L} |i\rangle \otimes_{|\xsj\rangle \in U} |j\rangle |\overline{d^2(\xsi, \xsj)}\rangle.
\end{equation*}

4. Step 2: Distance Minimization.

Find the minimum distance among $\{\overline{d^2(\xsi, \xsj)}\}_{|\xsi\rangle \in L, |\xsj\rangle \in U}$, record the corresponding register number $|i\rangle$ and $|j\rangle$.

5. Step 3: Label Assignment.

Set $|\zsj\rangle = |\zsi\rangle $ to be the label of $|\zsj\rangle$.

Remove $|\xsj\rangle$ from $U$, add $|\xsj\rangle |\zsj\rangle $ to $L$.

\end{algorithm}

In the quantum algorithm, we follow the same logic as its classical version, but store and manipulate data on a quantum computer. At each iteration, let $l = |L|, u = |U|$, the point distance estimation takes time $O (l u)$; the distance minimization takes $O (l u)$, and the label assignment takes $O(1)$, so the combined time complexity is $O (l u)$.

As a comparison, at each iteration, the corresponding classical algorithm takes time $O(l u d)$ to calculate the distance, and then $O (l u)$ for distance minimization, and $O(1)$ for label assignment, with a combined time complexity $O( l u d)$. 

The above discussion demonstrates the quantum supremacy of the quantum propagating nearest neighbor classifier. In particular, when the data points are of high dimension, the quantum speed-up becomes significant. By creating state-of-art superpositions, it is possible to reduce the running time of quantum distance estimation even further. Last but not least, we should keep in mind that QRAM processes data exponentially faster than RAM.

\section{Quantum Semi-Supervised K-Means}
\label{kmeans}

$K$-means clustering is a simple and popular unsupervised machine learning algorithm. It plays a significant role in cluster analysis. There are different ways of extending it to a semi-supervised setting. We consider the case when some data points are labeled, i.e. we have labeled data $\{ (\xsi, \zsi) \}_{i = 1}^l$, and unlabeled data $\{ \xsj \}_{j = l+1}^{l + u}$. One approach to classical semi-supervised K-means clustering is the following algorithm.

\begin{algorithm}
\textbf{Classical Semi-Supervised K-Means.}

Input: labeled data $\{ \ysi = (\xsi, \zsi) \}_{i = 1}^l$, unlabeled data $\{ \xsj \}_{j = l+1}^{l + u}$, distance function $d()$, number of centroids $k$. The total number of data points is $N = l + u$, with $l << u$.

1. Initially, for any $m \in [k]$, denote the set of labeled data points whose label is $m$ as $S_m$. If $S_m = \varnothing$, then select centroid $c_m^0$ randomly. If $S_m \neq \varnothing$, then $c_m^0 = \frac{1}{|S_m|}\sum_{\ysi \in S_m} \xsi$.

2. $t = 0.$

3. Repeat until convergence:

4. Step 1: Centroid Distance Calculation.

Calculate $d(\xsj, c_m^t)$ for all $j \in [l+1, N]$, and all $m \in [k]$.

5. Step 2: Cluster Assignment.

For labeled data, assign their original label; for unlabeled data, find the minimum distance among $\{d(\xsj, c_m^t) \}$, and assign $m$ as the label of $\xsj$, i.e. $\zsj = m$.

6. Step 3: Centroid Update.

For each $m$, let 

\begin{equation}
    c_m^{t+1} = \frac{\sum_{j=1}^{N} 1\{\zsj = m\} \xsi}{\sum_{j=1}^{N} 1\{\zsj = m\}}.
\end{equation}

7. $t=t+1$.

\end{algorithm}

We now propose the quantum semi-supervised $K$-means algorithm.

\begin{algorithm}
\textbf{Quantum Semi-Supervised K-Means.}

Input: labeled data $\{ \ysi = |\xsi\rangle |\zsi\rangle \}_{i = 1}^l$, unlabeled data $\{ |\xsj\rangle \}_{j = l+1}^{l + u}$, stored in QRAM data structure. Number of centroids $k$. Distance function $d()$. 
The total number of data points is $N = l + u$, with $l < < u$. The dimension of each data vector $|\xsi\rangle$ is $d$.

1. Initially, for any $m \in [k]$, denote the set of labeled data points whose label is $|m\rangle$ as $S_m$. If $S_m = \varnothing$, then select centroid $|c_m^0\rangle$ randomly. If $S_m \neq \varnothing$, then $|c_m^0\rangle = \frac{1}{|S_m|}\sum_{\ysi \in S_m} |\xsi\rangle$.

2. $t = 0.$

3. Repeat until convergence:

4. Step 1: Centroid Distance Estimation.

With states $|\xsj\rangle$'s and $|c_m^t\rangle$'s, perform the map:

\begin{equation*}
    \frac{1}{\sqrt{N}} \sum_{j=1}^{N} |j\rangle \otimes_{m \in [k]} |m\rangle |0\rangle
    \mapsto \frac{1}{\sqrt{N}} \sum_{j=1}^{N} |j\rangle \otimes_{m \in [k]} |m\rangle 
    |\overline{d^2(\xsj, c_m^t)}\rangle.
\end{equation*}

5. Step 2: Cluster Assignment.

For labeled data, assign their original label; for unlabeled data, find the minimum distance among $\{\overline{d(\xsj, c_m^t)} \}$, and assign $m$ as the label of $|\xsj\rangle$, i.e. $|\zsj\rangle = |m\rangle$. Next, uncompute Step 1 to create the superposition of all points and their labels

\begin{equation*}
     \frac{1}{\sqrt{N}} \sum_{j=1}^{N} |j\rangle \otimes_{m \in [k]} |m\rangle 
    |\overline{d^2(\xsj, c_m^t)}\rangle \mapsto
    \frac{1}{\sqrt{N}} \sum_{j=1}^{N} |j\rangle |\zsj\rangle.
\end{equation*}

6. Step 3: Label Measurement.

Denote the set of data points whose label is $|m\rangle$ as $S_m^t$. Measure the label register to obtain a state $|\chi_m^t\rangle = \frac{1}{|S_m^t|} \sum_{\ysj \in S_m^t} |j\rangle$, with probability $\frac{|S_m^t|}{N}$.

7. Step 4: Centroid Update.

Denote the data matrix as $V \in \mathbb{R}^{N \times d}$. Perform matrix multiplication to obtain

\begin{equation*}
    |c_m^{t+1}\rangle = V^T |\chi_m^t\rangle =  \frac{1}{|S_m^t|} \sum_{\ysj \in S_m^t} |\xsj\rangle.
\end{equation*}

8. $t = t+1.$

At each iteration, the quantum distance estimation takes time $O(k)$, while its classical counterpart takes time $O(N k d)$; the quantum cluster assignment takes time $O(k)$, while its classical counterpart takes time $O(N k)$. In the quantum case, the time complexity of label measurement is considered as $O(1)$, and then the quantum matrix multiplication of centroid update takes time $O(N)$ for each centroid. Hence, steps 3 and 4 take time $O(N k)$ combined. At the same time, the classical centroid update process takes time $O(N k d)$. 
Hence, the quantum algorithm is faster at every stage of the training, which demonstrates its supremacy.

\end{algorithm}

%\section{Quantum Semi-Supervised EM}

\section{Conclusion}

We propose the framework of quantum semi-supervised learning, which resolves two long-standing challenges of machine learning at the same time. Semi-supervised learning tackles issue of the lack of labeled data, and quantum computation provides dramatic speed-ups so that limit of computational power is no longer the issue. We provide a protocol that systematically designs quantum machine learning algorithms with quantum supremacy and showcase examples. The recipe can be extended beyond supervised, unsupervised, and semi-supervised learning. In the future, we will demonstrate how we provide quantum speed-ups to deep neural networks \cite{shangnan2021d}.  
We also aim at developing more complicated quantum semi-supervised learning algorithms, for example, quantum co-training, quantum graph-based training \cite{shangnan2021s}. 
Furthermore, we keep an ultimate goal in mind: use quantum-quantum learning to learn large quantum systems efficiently and reliably on a quantum computer \cite{shangnan2021qml}.

\section*{Acknowledgement}
We thank Xinyu Ren for suggesting a useful reference.
Z.S. is supported by the Simons Foundation and the National Science Foundation, under Grant No. 2111998.

\bibliographystyle{unsrt}
\bibliography{ref}

\end{document}